\theoremstyle{plain}
    \newtheorem{theorem}{Theorem}[section]
\newtheorem{proposition}[theorem]{Proposition}
    \newtheorem{lemma}[theorem]{Lemma}
   \theoremstyle{definition}     
 \newtheorem{definition}[theorem]{Definition}
      \theoremstyle{remark}
 \newtheorem{remark}[theorem]{Remark}
\newcounter{smallarabics}
\newenvironment{arabicenumerate}
{\begin{list}{{\normalfont\textrm{(\arabic{smallarabics})}}}
  {\usecounter{smallarabics}\setlength{\itemindent}{0cm}
   \setlength{\leftmargin}{5ex}\setlength{\labelwidth}{4ex}
   \setlength{\topsep}{0.75\parsep}\setlength{\partopsep}{0ex}
   \setlength{\itemsep}{0ex}}}
{\end{list}}
\newcounter{smallroman}
\newcommand{\ben}{\begin{arabicenumerate}}  
\newcommand{\een}{\end{arabicenumerate}}
\newcommand{\bex}{\begin{example}}
\newcommand{\eex}{\end{example}}
\def\bel{\begin{lemma}}
\def\eel{\end{lemma}}
\def\bet{\begin{theoreme}}
\def\eet{\end{theoreme}}
\def\bed{\begin{definition}}
\def\eed{\end{definition}}
\def\ber{\begin{remark}}
\def\eer{\end{remark}}
\def\cc{{\mathbb C}}
\def\part{{\rm par}}
\def\Im{{\rm Im}}
\def\Re{{\rm Re}}
\def\bar{\overline}
\def\c0inf{C_0^\infty}
\def\s{{\rm s}}
\def\sa{{\rm s/a}}
\def\fG{{\frak G}}
\def\a{{\rm a}}
\def\i{{\rm i}}
\def\Dom{{\rm Dom}}
\def\loplus{\mathop{\oplus}\limits}
\def\12{\frac{1}{2}}
\def\14{\frac{1}{4}}
\def\e{{\rm e}}
\def\loplus{\mathop{\oplus}\limits}
\def\12{\frac{1}{2}}
\def\e{{\rm e}}
\def\bep{\begin{proposition}}
\def\eep{\end{proposition}}
\def\s{{\rm s}}
\def\beq{\begin{equation}}
\def\eeq{\end{equation}}
\def\CARal{{\rm C\hskip 0.25 em \hbox{\raise 1.72 ex 
\hbox{$\scriptscriptstyle\rm al$}\kern -0.57 em A}R}}
\def\otimesal{\mathop{\hbox{\raise 1.5 ex
  \hbox{$\scriptscriptstyle\rm al$}
\kern -0.92 em \hbox{$\otimes$}}}}
\def\oplusal{\mathop{\hbox{\raise 1.5 ex
  \hbox{$\scriptscriptstyle\rm al$}
\kern -0.92 em \hbox{$\oplus$}}}}
\def\Gammal{\hbox{\raise 1.68 ex 
\hbox{$\scriptscriptstyle\rm al$}\kern -0.50 em $\Gamma$}}
\def\Bal{\hbox{\raise 1.68 ex 
\hbox{$\scriptscriptstyle\rm  al$}\kern -0.50 em $B$}}
\def\CARal{{\rm C\hskip 0.25 em \hbox{\raise 1.72 ex 
\hbox{$\scriptscriptstyle\rm al$}\kern -0.57 em A}R}}
      \def\@setcopyright{}
      \def\serieslogo@{}
\begin{document}

\author{Jan Derezi\'{n}ski}
   \address[J. Derezi\'{n}ski]{Department of Mathematical Methods in Physics,
     Faculty of Physics, University of
     Warsaw, Ho{\.z}a 74, 00-682 Warszawa, Poland}
   \email{Jan.Derezinski@fuw.edu.pl}
\author{Marcin Napi\'{o}rkowski}
   \address[M. Napi\'{o}rkowski]{Department of Mathematical Methods in Physics,
     Faculty of Physics, University of
     Warsaw, Ho{\.z}a 74, 00-682 Warszawa, Poland}
   \email{Marcin.Napiorkowski@fuw.edu.pl}
\author{Jan Philip Solovej}
   \address[J. P. Solovej]
{Department of Mathematics, University of Copenhagen, Universitetsparken 5,
  2100 Copenhagen, Denmark}
\email{solovej@math.ku.dk}
\title[]{On the minimization of  Hamiltonians\\ over pure Gaussian states}

\begin{abstract}
A  Hamiltonian defined as a polynomial in creation
and annihilation operators is considered. After a minimization of
 its expectation value 
over pure Gaussian states,  the Hamiltonian is Wick-ordered in
creation and annihillation operators adapted to the
minimizing  state. It is shown that this procedure 
 eliminates from the Hamiltonian terms of degree 1 and 2 that do not
preserve the particle number, and leaves only the terms that can be interpreted
as quasiparticles excitations.  We propose to call this fact {\em Beliaev's
    Theorem}, since to our knowledge it was mentioned for the first time in a
  paper by Beliaev from 1959. 
   \end{abstract}
\maketitle
 \section{Introduction} Various phenomena in  many-body quantum
 physics are  explained with help of  {\em quasiparticles}. 
Unfortunately, we  are not aware of a rigorous definition of
this concept, except for some very special cases. 

 A typical situation when
one speaks about quasiparticles seems to be the following: Suppose that the
Hamiltonian of a system can be written as
$
H=H_0+V$, where
  $H_0$ is in some sense  dominant
 and $V$ is a perturbation
that in first approximation can be neglected. 
Suppose also that 
\beq
H_0=B+\sum_i \omega_ib_i^*b_i,\label{form}\eeq
  where $B$ is a number, operators $b_i^*$/$b_i$ 
satisfy the standard canonical commutation/anticommutation relations
(CCR/CAR) and the Hilbert space contains a state annihilated by $b_i$ (the
{\em Fock vacuum for $b_i$}).
We then say that the  operators 
 $b_i^*$/$b_i$ {\em create/annihilate a quasiparticle}.

Of course, the above definition is very vague. 

 In our
paper we describe a simple theorem that for many Hamiltonians  gives a natural
decomposition $H=H_0+V$ with $H_0$ of the form 
(\ref{form}), and thus suggests  a possible definition of a
quasiparticle. 
Our starting point is a fairly general Hamiltonian $H$ defined on a bosonic or
fermionic Fock space. For simplicity we assume that the $1$-particle space is finite
dimensional. With some technical assumptions, the whole picture should be
easy to generalize to the infinite dimensional case. 
We
assume that the Hamiltonian is a polynomial in creation and annihilation
operators $a_i^*$/$a_i$, $i=1,\dots,n$. 
(This is a typical assumption in Many Body Quantum Physics and
Quantum Field Theory). 

An important role in Many Body Quantum Physics is played by the so-called {\em
 Gaussian states}, called also {\em quasi-free states}. Gaussian states can be
{\em pure} or {\em mixed}. 
The former are typical for the zero temperature,
whereas the latter for positive temperatures. In our paper we do not consider
mixed Gaussian states.
 
Pure Gaussian states are
 obtained by applying  Bogoliubov transformations to the Fock vacuum state
 (given by the vector 
$\Omega$ annihilated by $a_i$'s). Pure Gaussian states are especially convenient for
computations. 

We minimize the expectation value of the Hamiltonian $H$ with 
respect to pure Gaussian states, obtaining a state given by a vector
$\tilde\Omega$. By 
applying an appropriate Bogoliubov transformation, we can replace the old
creation and annihilation operators $a_i^*$, $a_i$ by new ones 
 $b_i^*$, $b_i$, which are adapted to the ``new vacuum'' $\tilde\Omega$, i.e.,
 that
satisfy $b_i\tilde\Omega=0$. We can rewrite the Hamiltonian $H$ in the new
operators and Wick order them, that is, put $b_i^*$ on the left and $b_i$ on
the right. The theorem that we prove says that
\[H=B+\sum_{ij} D_{ij}b_i^*b_j+V,\]
where $V$ has only terms of the order greater than $2$. In  particular, $H$
does not contain terms of the type $b_i^*$, $b_i$,  $b_i^*b_j^*$, or
$b_ib_j$. It is thus natural to set $H_0:=B+\sum_{ij}
D_{ij}b_i^*b_j$. $D_{ij}$ is a hermitian matrix. Clearly, it can be
diagonalized, so that $H_0$ acquires the form of (\ref{form}).

We present several
 versions of this theorem. First we assume that the Hamiltonian
is even. In this case
 it is natural to restrict the minimization to even pure Gaussian
states. In the fermionic case, we can also minimize over odd pure Gaussian states. 
In the bosonic case, we consider also Hamiltonians without the
evenness assumption, and then we minimize with respect to all pure 
Gaussian states.

 The procedure of minimizing over Gaussian states is widely
applied in practical computations and is known under many names. In the
fermionic case in the contex of nuclear physics
it often goes under the name of the {\em Hartree-Fock-Bogoliubov
method} \cite{RS}.  It 
 is closely related to the {\em Bardeen-Cooper-Schrieffer
approximation} used in superconductivity
\cite{1} and the {\em Fermi liquid theory}
developed by Landau \cite{6}. 
In the bosonic case it is closely related to the {\em Bogoliubov approximation} used
in the theory of superfluidity
\cite{3}, see also \cite{9,4}. 
In both bosonic and fermionic cases it is often called the
{\em mean-field approach} \cite{FW}.

 The fact  that we describe in our paper is probably very well known, at
least on the intuitive level, to many physicists, especially in condensed
matter theory.
One can probably say that it
 summarizes in abstract terms one of
the most widely used methods of contemporary quantum physics.
The earliest reference that we know to a statement similar 
to our main result is formulated in a paper of Beliaev \cite{Be}. Beliaev
studied fairly general femionic Hamiltonians by what we would nowadays call 
the Hartree-Fock-Bogoliubov approximation. In a footnote on page 10 he writes:

\noindent
{\em The condition $H_{20}=0$ may be easily shown to be exactly equivalent to
  the requirement of a minimum ``vacuum'' energy $U$. Therefore, the ground
  state of the system in terms of new particles is a ``vacuum'' state. The
  excited states are characterized by definite numbers of new particles,
  elementary excitations.} 

\noindent Therefore, we propose to call the main result of our paper {\em
  Beliaev's Theorem}.

The proof of Beliaev's Theorem is not
 difficult, especially when it is is formulated in an abstract way,
as we do. Nevertheless, in concrete situations, when similar computations are
 performed, consequences 
of this result   may often 
appear somewhat miraculous. The authors of this work
witnessed it several times: the authors themselves, or their colleagues, after
tedious computations and numerous mistakes watched the unwanted terms
disappear \cite{4,DMN}. As we show, these terms have to disappear by a general
argument.

\medskip
{\small {\bf Acknowledgement.} J.~D. thanks V.~Zagrebnov for useful
 discussions.

 The work of J.~D.
was supported in part by the grant N N201 270135 of
the Polish Ministry of Science and Higher Education. 
J.~D. and J.~P.~S. thank the Danish Council for Independent Research for support
during a visit in the Fall of 2010 of J.~D. to the Department of Mathematics,
University of Copenhagen. 
The work of M.~N.  was supported by the Foundation for Polish Science
International PhD Projects Programme co-financed by the EU  within the
Regional Development Fund}.

\section{Preliminaries}
\subsection{2nd quantization}

We will consider in parallel the bosonic and fermionic case.

Let us describe our notation concerning the 2nd quantization. 
We will always assume that the 1-particle space is $\cc^n$. (It is easy
to extend our analysis to the infinite dimensional case).  The {\em bosonic
Fock space} will be denoted $\Gamma_\s(\cc^n)$ and the {\em fermionic Fock space}
$\Gamma_\a(\cc^n)$.
We use the notation $\Gamma_\sa(\cc^n)$ for either the bosonic or fermionic
Fock space. $\Omega\in \Gamma_\sa(\cc^n)$ stands for the {\em Fock vacuum}.
If $r$ is an operator on $\cc^n$, then $\Gamma(r)$ stands for its {\em 2nd
quantization}, that is
\[\Gamma(r):=\left(\loplus_{n=0}^\infty r^{\otimes n}\right)\Big|_{\Gamma_\sa(\cc^n)}.\]
$a_i^*$, $a_i$ denote the standard {\em creation} and {\em annihilation operators} on
$\Gamma_\sa(\cc^n)$, satisfying the usual canonical
commutation/anticommutation relations.

\subsection{Wick quantization} 

Consider an arbitrary  polynomial on $\cc^n$, that is a function of the form
\begin{align}
 h(\bar{z},z):=\sum_{\alpha,\beta} h_{\alpha,\beta}\bar{z}^{\alpha}z^{\beta}, \label{wielomian}
\end{align}
where $z=(z_{1},\ldots, z_{n})\in \mathbb{C}^{n}$, $\bar z$ denotes the
complex conjugate of $z$  and 
 $\alpha=(\alpha_{1},\ldots,\alpha_{n})
 \in (\mathbb{N}\cup\{0\})^{n}$  represent multiindices.
In the bosonic/fermionic case we  always assume that the coefficients
$h_{\alpha,\beta}$ are  symmetric/antisymmetric separately in the
indices of $\bar z$ and $z$.

We write
 $|\alpha|=\alpha_{1}+\cdots+\alpha_{n}.$ 
We say that $h$ is {\em even} if the sum  in  (\ref{wielomian})
is restricted to even
$|\alpha|+|\beta|$ .

{\em The Wick quantization} of (\ref{wielomian})
is  the operator on $\Gamma_\sa(\cc^n)$ defined as
\begin{align}
h(a^{*},a):=\sum_{\alpha, \beta} h_{\alpha,
  \beta}(a^{*})^{\alpha}a^{\beta}. \label{pqi} 
\end{align}
In the fermionic case, (\ref{pqi}) defines a bounded operator on
 $\Gamma_\a(\cc^n)$. In the bosonic case,  (\ref{pqi}) can be viewed as an
operator on $\bigcap\limits_{n>0}\Dom N^n\subset \Gamma_\s(\cc^n)$, where 
\[N=\sum_{i=1}^na_i^*a_i\]
is the number operator.

\subsection{Bogoliubov transformations} 

We will now present some basic well known facts
about Bogoliubov transformations. For proofs and additional information we
refer to \cite{2} (see also \cite{5}, \cite{F}). We will often use the 
summation convention of summing with respect to repeated indices.

Operators of the form 
\begin{align}
Q=\theta_{ij}a_{i}^{*}a_{j}^{*}+h_{kl}a_{k}^{*}a_{l}+\bar{\theta}_{ij}
a_{j}a_{i}, 
\label{quadratic}
\end{align}
where $h$ is a self-adjoint matrix, will be called \emph{quadratic Hamiltonians}.
In
the bosonic/fermionic case we can always assume  that 
$\theta$ is symmetric/antisymmetric. The group generated by operators of the form $\e^{\i Q}$, where $Q$ is a
quadratic Hamiltonian, is called the \emph{metaplectic (Mp)} group in the
bosonic case and the \emph{Spin} group in the fermionic case.

In the bosonic case, the group generated by  $Mp$ together with
$\e^{\i( y_{i}a_i^*+\bar y_{i} a_i)}$, $y_{i}\in\cc$,
$i=1,\dots,n$, is called the
\emph{affine mataplectic (AMp)}
group.

In the fermionic case, the goup generated by operators 
$y_i a_i^*+\bar{y}_{i}a_i$ with $\sum |y_i|^2=1$ (which are unitary) is called
the \emph{Pin} group. Note that $Spin$ is a subgroup of $Pin$ of index $2$.

In the bosonic case, consider $U\in\emph{AMp}$. It is well known that
\begin{align}
Ua_{i}U^{*}=p_{ij}a_{j}+q_{ij}a_{j}^{*}+\xi_i, \quad Ua^{*}_{i}U^{*}=\bar{p}_{ij}a^{*}_{j}+\bar{q}_{ij}a_{j}+\bar\xi_i  \label{linear}
\end{align} 
for some matrices $p$ and $q$ and a vector $\xi$. 

In the fermionic case, consider $U\in\emph{Pin}$. Then
\begin{align}
Ua_{i}U^{*}=p_{ij}a_{j}+q_{ij}a_{j}^{*}, \quad Ua^{*}_{i}U^{*}=\bar{p}_{ij}a^{*}_{j}+\bar{q}_{ij}a_{j}  \label{linear1}
\end{align} 
for some matrices $p$ and $q$.

The maps
(\ref{linear}) and (\ref{linear1}) 
 are often called {\em Bogoliubov transformations}.
Bogoliubov transformations can be interpreted as automorphism of the
corresponding {\em classical phase space}. Let us describe briefly
this interpretation.

Consider the space $\cc^n\oplus\cc^n$. It
 has a distinguished $2n$-dimensional real subspace
consisting of vectors $(z,\bar z)=\left((z_i)_{i=1,\dots,n},(\bar
z_i)_{i=1,\dots,n}\right)$, which we will call {\em the real part of
$\cc^n\oplus\cc^n$}, and which can be interpreted as the classical
phase space.
The real part of $\cc^n\oplus\cc^n$ is equipped with a
symplectic form 
\begin{equation}
(z,\bar z)\omega(z',\bar z'):=\Im (  z|z'),\label{form1}\end{equation}
and a scalar product
\begin{equation}
(z,\bar z)\cdot(z',\bar z'):=\Re ( z|z').\label{form2}\end{equation}

Consider the bosonic case. 
Note that   the transformation (\ref{linear}),
viewed as a map on the real part of $\cc^n\oplus\cc^n$ given by 
the matrix
$\left[\begin{array}{cc}
p&q\\\bar q&\bar p
\end{array}\right]$ and the vector  $\left[\begin{array}{c}
\xi\\\bar \xi
\end{array}\right]$, preserves the symplectic form (\ref{form1})
-- in other words, it belongs to $ASp$, the
{\em affine symplectic group}. More precisely, it is easily checked
that in this way
we obtain a 2-fold covering homomorphism of $AMp$ onto $ASp$.

In the fermionic case there is an analogous situation.
The transformation  (\ref{linear1}),
viewed as a map on the real part  of $\cc^n\oplus\cc^n$ given by the matrix
$\left[\begin{array}{cc}
p&q\\\bar q&\bar p
\end{array}\right]$, preserves the scalar product (\ref{form2})
-- in other words, it belongs to $O$, the
{\em orthogonal group}. More precisely, it is easily checked
that in this way
we obtain a 2-fold covering homomorphism of $Pin$ onto $O$.

\subsection{Pure Gaussian states}
We will use  the term {\em pure state} to denote a normalized vector modulo
a phase factor.  In particular, we
will distinguish between a pure state and its {\em vector  representative}.


On Fock spaces we have a distinguished pure 
state called the  {\em (Fock) vacuum state},
corresponding to 
$\Omega$. 
States given by vectors
 of the form $U\Omega,$ where $U\in Mp$ or $U\in Spin$, 
will be called \emph{even pure Gaussian states}.
 The family of even pure Gaussian states will be denoted by $\fG_{\sa,0}$.

In the bosonic case, states given by vectors
 of the form $U\Omega$ where $U\in AMp$ will be
called \emph{Gaussian pure states}. 
 The family of bosonic pure Gaussian states will be denoted by $\fG_{\s}$.

In the fermionic case, states given by vectors of the form $U\Omega$, where
$U\in Pin$ will be called \emph{fermionic pure Gaussian states}. The family of
fermionic pure Gaussian states is denoted $\fG_\a$.

Fermionic pure 
Gaussian states that are not even will be called {\em odd  fermionic
pure  Gaussian states}. The family of
odd fermionic pure Gaussian states is denoted $\fG_{\a,1}$.

One can ask whether pure  Gaussian states have {\em  natural}
 vector representatives
(that is, whether one can naturally fix the phase factor of their vector
 representatives).  In the
bosonic case this is indeed always possible.
If $c=[c_{ij}]$ is a symmetric matrix satisfying $\|c\|<1$, then the vector
\beq\det(1-c^*c)^{1/4}\e^{\frac12c_{ij} a_i^*a_j^*}\Omega\label{bosgau}\eeq
defines a state in $\fG_{\s,0}$ (see \cite{10}). 
If 
$\theta=[\theta_{ij}]$ is a symmetric matrix satisfying
$c=\i\frac{\tanh\sqrt{ \theta\theta^*}}{\sqrt{\theta\theta^*}}\theta$, then
 (\ref{bosgau}) equals
\beq \e^{\i X_\theta}\Omega \label{bosgau1} \eeq 
with
\beq X_\theta:=\theta_{ij}a_i^*a_j^*+\bar\theta_{ij} a_ja_i.\eeq

 Each state in  $\fG_{\s,0}$ is represented
uniquely as (\ref{bosgau}) (or equivalently as (\ref{bosgau1})).
In particular,
 (\ref{bosgau1}) provides
 a smooth parametrization of $\fG_{\s,0}$ by symmetric matrices.

The manifold of fermionic even pure Gaussian states is more complicated.
 We will say that a  fermionic even pure 
Gaussian state given by $\Psi$ is {\em
    nondegenerate} if $(\Omega|\Psi)\neq0$ (if it has a nonzero overlap with
  the vacuum). Every nondegenerate fermionic even pure Gaussian state can be
  represented by a vector
\beq \det(1+c^*c)^{-1/4}\e^{\frac12c_{ij} a_i^*a_j^*}\Omega,\label{fergau}\eeq
where  $c=[c_{ij}]$ is an antisymmetric matrix.
If 
$\theta=[\theta_{ij}]$ is an antisymmetric matrix satisfying
$c=\i\frac{\tan\sqrt {\theta\theta^*}}{\sqrt{\theta\theta^*}}\theta$,
$\|\theta\|<\pi/2$, then
 (\ref{fergau})  equals
\beq \e^{\i X_\theta}\Omega \label{fergau1}\eeq 
with
\beq X_\theta:=\theta_{ij}a_i^*a_j^*+\bar\theta_{ij} a_ja_i.\eeq

Not all even fermionic pure Gaussian states are nondegenerate. 
{\em  Slater determinants} with an even nonzero 
number of particles are examples of even
Gaussian pure states that are not 
nondegenerate.
Note  that 
 vectors (\ref{fergau}) are natural representatives of their states. It is
 easy to see  that only nondegenerate
 fermionic pure Gaussian states possess natural
 vector representatives.

Nondegenerate pure
Gaussian states form  an open dense subset of $\fG_{\a,0}$
 containing the Fock state
(corresponding to $c=\theta=0$).
In particular, (\ref{fergau}) provides
 a smooth parametrization of a neighborhood of the Fock state in 
$\fG_{\a,0}$ by antisymmetric matrices.

The fact that each even bosonic/nondegenerate fermionic pure Gaussian state
can be represented by a vector of the form
(\ref{bosgau})/(\ref{fergau}) goes under the name of the {\em Thouless
  Theorem}. (See \cite{Tho}; this name is used eg. in the monograph by Ring and
Schuck \cite{RS}). The closely
related fact saying that these vectors can be represented in the form 
(\ref{bosgau1})/(\ref{fergau1}) is sometimes called the {\em Ring-Schuck
  Theorem.}

By definition, the group AMp/Pin acts transitively on $\fG_{\sa}$.
In other words, for any $\tilde\Omega
\in \fG_{\sa} $ we can find $U\in AMp/Pin$ such
that $\tilde\Omega=U\Omega$. Such a $U$ is not defined uniquely -- it can be replaced
by $U\Gamma(r)$, where $r$ is unitary on $\cc^n$.

Clearly, if we set 
\begin{align}
b_{i}:=Ua_{i}U^{*}, \quad b^{*}_{i}:=Ua^{*}_{i}U^{*}, \label{proper}
\end{align} 
then $b_i\tilde\Omega=0$,
 $i=1,\dots,n$, and they satisfy the same CCR/CAR as $a_i$,
$i=1,\dots,n$. 
If $h$ is a polynomial of the form (\ref{wielomian}), then we can Wick
quantize it using the transformed operators:
\begin{align}
h(b^{*},b)=\sum_{\alpha, \beta} h_{\alpha, \beta}(b^{*})^{\alpha}b^{\beta}. \nonumber
\end{align}
Obviously, $Uh(a^*,a)U^*=h(b^*,b)$.

 \section{Main result}
  As explained in
the introduction, we think that the following result should be called {\em
  Beliaev's Theorem}. 

\begin{theorem} 
Let $h$ be a polynomial on $\cc^n$ and $H:=h(a^*,a)$ its Wick
quantization. We consider the following functions:
\ben 
\item (bosonic case, even pure Gaussian states)
$
\fG_{\s,0}\ni \Phi\mapsto (\Phi|H\Phi)$;
\item (bosonic case, arbitrary pure Gaussian states)
$\fG_{\s}\ni\Phi\mapsto (\Phi|H\Phi)$;
\item (fermionic case, even pure Gaussian states)
$\fG_{\a,0}\ni\Phi\mapsto (\Phi|H\Phi)$;
\item (fermionic case, odd pure Gaussian states)
$\fG_{\a,1}\ni \Phi\mapsto (\Phi|H\Phi)$.
\een
In (1), (3) and (4) we assume in addition that the polynomial $h$
is even. 

For a vector $\tilde\Omega$ representing a pure Gaussian state, 
let $U\in AMp/Pin$ satisfy $\tilde\Omega=U\Omega$. Set $b_i=Ua_i U^*$ and suppose that $ \tilde h$ is the polynomial satisfying
$H=\tilde h(b^*,b)$.
Then the following statements are equivalent:
\medskip

\noindent (A)
 $\tilde\Omega$ represents a stationary point of the function defined in
(1)--(4). 

\medskip

\noindent (B)
\begin{align}
\tilde{h}(b^{*},b)=B+D_{ij}b^{*}_{i}b_{j}+\text{\emph{terms of higher order
    in $b$'s}}. \nonumber 
\end{align}
\label{main}\end{theorem}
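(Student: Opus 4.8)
The plan is to strip off the Bogoliubov transformation, reduce all four cases to a single variational computation at the Fock vacuum, and then read the relevant Wick coefficients off a first variation. First I would remove $U$. Since $AMp$/$Pin$ is a group, $\Phi\mapsto U^*\Phi$ is a diffeomorphism of the pertinent manifold of Gaussian states onto itself; in the odd fermionic case it carries $\fG_{\a,1}$ onto $\fG_{\a,0}$, because $U$ is odd and odd$\,\cdot\,$odd$\,=\,$even. Under this map $\tilde\Omega=U\Omega$ goes to $\Omega$, the function $\Phi\mapsto(\Phi|H\Phi)$ becomes $\Psi\mapsto(\Psi|U^*HU\,\Psi)$, and since $U^*b_iU=a_i$ one gets $U^*HU=\tilde h(a^*,a)$, which is already Wick-ordered in the $a$'s. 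Conjugation preserves the CCR/CAR, and it preserves evenness: $h$ even means $H$ commutes with $(-1)^N$, while for odd $U$ one has $U(-1)^NU^*=-(-1)^N$, which has the same commutant, so the evenness hypothesis passes to $\tilde h$ in cases (1),(3),(4). Hence it suffices to prove that $\Omega$ is a stationary point of $\Psi\mapsto(\Psi|\tilde h(a^*,a)\Psi)$ over the appropriate family if and only if $\tilde h(a^*,a)=B+D_{ij}a_i^*a_j+(\text{higher order})$; and after this reduction case (4) is literally case (3) over $\fG_{\a,0}$.

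Next I would parametrize the first variation using the explicit charts from the Preliminaries. Near $\Omega$ the even Gaussian states are described by (\ref{bosgau1})/(\ref{fergau1}), that is by $\theta\mapsto\e^{\i X_\theta}\Omega$ with $X_\theta=\theta_{ij}a_i^*a_j^*+\bar\theta_{ij}a_ja_i$ and $\theta$ symmetric/antisymmetric; in case (2) one appends the Weyl directions $\e^{\i(y_ia_i^*+\bar y_ia_i)}$. Thus the admissible tangent directions at $\Omega$ are generated by the anomalous quadratic generators $X_\theta$ (all cases) and, in case (2) only, by the linear generators $y_ia_i^*+\bar y_ia_i$; notice that number-preserving quadratics simply do not occur in these charts, since they fix $\Omega$. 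Because $\Psi\mapsto(\Psi|\tilde h(a^*,a)\Psi)$ is real-valued, its derivative along $t\mapsto\e^{\i tG}\Omega$ equals $\i\,(\Omega|[\tilde h(a^*,a),G]\,\Omega)$, and $\Omega$ is stationary precisely when this vanishes for every generator $G$ in the list.

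The computation then reduces to two elementary vacuum expectations, carried out on finite-particle vectors so that no domain question arises. Using $a_k\Omega=0$ together with Wick ordering one finds $(\Omega|[\tilde h(a^*,a),a_k^*]\,\Omega)=\tilde h_{0,e_k}$ and $(\Omega|[\tilde h(a^*,a),a_k]\,\Omega)=-\tilde h_{e_k,0}$ (here $e_k$ is the $k$-th unit multiindex), so the linear variations of case (2) all vanish exactly when $\tilde h$ carries no term of degree $1$; in cases (1),(3),(4) these terms are absent already by evenness. For a number-preserving generator one has $a_k^*a_l\,\Omega=0$ and $a_l^*a_k\,\Omega=0$, hence $(\Omega|[\tilde h(a^*,a),a_k^*a_l]\,\Omega)=0$ automatically, confirming that such directions impose no condition. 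Finally, for $G=X_\theta$ only the pure-creation and pure-annihilation degree-$2$ coefficients survive, and $(\Omega|[\tilde h(a^*,a),X_\theta]\,\Omega)$ is a constant multiple of $\theta_{ij}\tilde h_{0,e_i+e_j}-\bar\theta_{ij}\tilde h_{e_i+e_j,0}$. Using the self-adjointness relation $\tilde h_{0,\gamma}=\overline{\tilde h_{\gamma,0}}$ (immediate since $((a^*)^\beta a^\alpha)^*$ is already Wick-ordered), this is purely imaginary; letting $\theta$ range over all symmetric/antisymmetric matrices — matching the symmetry type of the coefficients — its vanishing for every $\theta$ forces $\tilde h_{\gamma,0}=\tilde h_{0,\gamma}=0$ for all $|\gamma|=2$. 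Collecting the two computations gives that stationarity holds if and only if $\tilde h(a^*,a)=B+D_{ij}a_i^*a_j+(\text{higher order})$, which is (B).

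The hard part here is organizational rather than conceptual. The one genuinely delicate point is getting the tangent space right: that number-preserving quadratics are stabilizer directions and therefore place no constraint on $\tilde h$, whereas the charts (\ref{bosgau1})/(\ref{fergau1}) already exclude them and expose exactly the anomalous directions. The second point demanding care is case (4), where conjugation by the odd element $U$ must be checked to both swap the parity sector and preserve the evenness of the symbol, so that (4) truly collapses to (3). The combinatorial constants in the quadratic commutator, and the matching of the symmetry type of $\theta$ with that of the degree-$2$ coefficients, are the only places where the bosonic and fermionic conventions have to be verified separately.
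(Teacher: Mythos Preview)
Your proof is correct and follows essentially the same approach as the paper: both reduce to a first-order variation at the Fock vacuum using the charts $\theta\mapsto\e^{\i X_\theta}\Omega$ (together with the Weyl directions in case~(2)) and read off the degree-$1$ and anomalous degree-$2$ Wick coefficients. The only cosmetic difference is that you package the first variation as the commutator $\i(\Omega|[\tilde h(a^*,a),G]\Omega)$, whereas the paper expands $\e^{-\i X_\theta}\e^{-\i\phi(y)}\tilde h(a^*,a)\e^{\i\phi(y)}\e^{\i X_\theta}$ directly to first order; you are also more explicit than the paper about why case~(4) collapses to case~(3) and why evenness of $\tilde h$ survives conjugation by an odd element of $Pin$.
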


\begin{proof}
Let us prove the case (2), which is a little more complicated than the
remaining cases.
Let us fix $U\in AMp$ so that $\tilde\Omega=U\Omega$.
 Clearly, we can write
\begin{align}
H=\tilde{h}(b^{*},b)=B+\bar{K}_{i}b_{i}+K_{i}b_{i}^{*}+O_{ij}b^{*}_{j}b^{*}_{i}+\bar{O}_{ij}b_{i}b_{j}
+ D_{ij}b^{*}_{i}b_{j} + \nonumber \\ 
+ \text{terms of higher order in
  $b$'s}. 
\label{transH}
\end{align}  
We know that in a neighborhood of $\tilde\Omega$ arbitrary pure
 Gaussian states are
parametrized by a symmetric matrix $\theta$ and a vector $y$:
\[\theta\mapsto U \e^{i\phi(y)}\e^{iX_\theta}\Omega,\] 
where $X_\theta:=\theta_{ij}a_i^*a_j^*+\bar\theta_{ij}a_ja_i$ and $\phi(y)= y_{i}a_i^*+\bar y_{i} a_i$.
We get
\begin{eqnarray}
(U\e^{\i\phi(y)}\e^{\i X_\theta}\Omega|HU \e^{\i\phi(y)}e^{\i X_\theta}\Omega)&=&(\e^{\i\phi(y)}\e^{\i X_\theta}\Omega|U^{*}\tilde h(b^*,b)U\e^{\i\phi(y)}\e^{\i X_\theta}\Omega)\nonumber \\
&=&(\Omega|\e^{-\i X_\theta}\e^{-\i\phi(y)}\tilde h(a^*,a)\e^{\i\phi(y)}\e^{\i X_\theta}\Omega). \label{obroty} \end{eqnarray}
Now 
\begin{eqnarray*}
\e^{-\i X_\theta}\e^{-\i\phi(y)}\tilde h(a^*,a)\e^{\i\phi(y)}\e^{\i
  X_\theta}
&=&B- \i(\bar{\theta}_{ij}O_{ij}- \theta_{ij}\bar{O}_{ij})
-\i(\bar{y}_{i}K_{i}-y_{i}\bar{K}_{i})\\&&+\hbox{terms containing $a_i$ or $a_i^*$
  }+O(\|\theta\|^{2},\|y\|^{2}). \end{eqnarray*}
Therefore, (\ref{obroty}) equals
\begin{eqnarray}
B- \i(\bar{\theta}_{ij}O_{ij}- \theta_{ij}\bar{O}_{ij})
-\i(\bar{y}_{i}K_{i}-y_{i}\bar{K}_{i})+O(\|\theta\|^{2},\|y\|^{2}).
\label{obrot} 
\end{eqnarray}
Since vectors $y$ and matrices $\theta$ are independent variables,
 (\ref{obrot}) is stationary at $\tilde\Omega$ if and only if $[O_{ij}]$ is a
zero matrix and $[K_{i}]$ is a zero vector. This ends the proof of part (2).

To prove  (3) and (4) we note that
 for $U\in Pin$, the neighborhood of $\tilde{\Omega}=U\Omega$  in the set of
 fermionic pure Gaussian states is
parametrized by antisymmetric matrices $\theta$:
\[\theta\mapsto U \e^{iX_\theta}\Omega,\] 
where again
$X_\theta:=\theta_{ij}a_i^*a_j^*+\bar\theta_{ij}a_ja_i$. Therefore, it
suffices to repeat the above proof with $y_i=K_i=0$, $i=1,\dots,n$.

The proof of (1) is similar. \end{proof}

\begin{proposition} In addition to the assumptions of Theorem \ref{main} (2),
  suppose that
 $\tilde\Omega$ corresponds to a minimum. Then
the matrix $[D_{ij}] $ is positive.
\eep 

\begin{proof}
Using that
  $O$ and $K$ are zero, we obtain
\begin{eqnarray*}
\e^{-\i\phi(y)}\tilde h(a^*,a)\e^{\i\phi(y)}
&=&B
+\bar{y}_{i}D_{ij}y_j\\&&+\hbox{terms containing $a_i$ or $a_i^*$
  }+O(\|y\|^{3}). \end{eqnarray*}
Therefore, (\ref{obroty}) equals
\begin{eqnarray}
B+\bar{y}_{i}D_{ij}y_j+O(\|y\|^{3}).
\label{obrot1} 
\end{eqnarray}
Hence the matrix
$[D_{ij}]$ is positive.
\end{proof}

 Note that in cases (1), (3) and (4) the matrix $[D_{ij}]$ does not have
  to be positive.

\end{document}